\documentclass[a4paper]{article}

\usepackage[english]{babel}
\usepackage[utf8]{inputenc}
\usepackage{amsmath}
\usepackage{amssymb}
\usepackage{amsthm}
\newtheorem{theorem}{Theorem}[section]
\newtheorem{lemma}{Lemma}[section]
\newtheorem{corollary}{Corollary}[section]
\usepackage{graphicx}
\usepackage[colorinlistoftodos]{todonotes}

\title{An Important Corollary for the Fast Solution of Dynamic Maximal Clique Enumeration Problems}

\author{Neal Lawton}

\date{\today}

\begin{document}
\maketitle

\begin{abstract}
In this paper we modify an algorithm for updating a maximal clique enumeration after an edge insertion to provide an algorithm that runs in linear time with respect to the number of cliques containing one of the edge's endpoints, whereas existing algorithms take quadratic time.
\end{abstract}

\section{Introduction}

\subsection{Terminology and Notation}

For a graph $G(V,E)$, a \textit{clique} $C$ in $G$ is a subset of vertices that are all adjacent to each other in $G$, i.e., $C$ is a complete subgraph of $G$. A clique $C$ in $G$ is \textit{maximal} if no other clique in $G$ contains $C$. An edge with endpoints $u, v \in V$ is denoted $uv$. For convenience, when we refer to the \textit{neighborhood} of a vertex $u \in V$, we actually mean the closed neighborhood, i.e., the set $\{v : v = u \text{ or } uv \in E \}$, and denote this $N(u)$.

\subsection{Problem Statement}

The solution to a Maximal Clique Enumeration (MCE) Problem for a graph $G$ is a list of all maximal cliques in $G$. In dynamic MCE Problems, our task is to update the list of maximal cliques after inserting or deleting an edge. In the case of inserting an edge, our task is to list all maximal cliques in $G(V, E \cup \{uv\})$ given the list of all maximal cliques in $G(V,E)$.

\subsection{Motivation}

The MCE problem for a static graph has many important applications, such as in solving graph coloring problems: since every pair of vertices in a clique must be assigned different colors, a list of maximal cliques allows us to prune the search tree immensely. Graph coloring, in turn, is used in many scheduling problems. For example, given a list of time intervals during which various flights will need to use a gate, we can construct a graph whose vertices represent flights and whose edges connect vertices whose corresponding flights require a gate at the same time. The chromatic number of this graph is the number of gates required to service all the flights. However, unexpected flight delays induce changes to the graph throughout the day, requiring the solution of another slightly different graph coloring problem and thus another slightly different maximal clique enumeration problem.

The dynamic MCE problem also finds application in computational topolgoy. Given a point cloud $V$ in a metric space and a radius $\epsilon$, we can construct a graph analogous to the Vietoris-Rips Complex $R_{\epsilon}(V)$, where vertices represent points in the point cloud and where two vertices are connected by an edge if the corresponding points in the point cloud lie within distance $\epsilon$ of each other. For $\epsilon = 1$ and $V \subset \mathbb{R}^2$, the graph is a unit disk graph. In computational topology, we can compute the homology for a single graph by enumerating its maximal cliques and using the algorithm proposed by \cite{zomorodian}. Characterizing the topology of the point cloud by the homology of the VR Complex for a single $\epsilon$ is a bad idea, since noise can produce false topological features. What we really want to know is which topological features persist over a wide range of $\epsilon$. Calculting and sorting all pairwise distances of points in $V$ induces a sequence of graphs $\{G_n\}$ where the edges of the $n$th graph are the first $n$ edges of the sorted list of distances. Listing all maximal cliques for each of these graphs is a dynamic MCE Problem, the solution of which is used to construct a \textit{persistence barcode} that illustrates the emergence and disappearance of topological features of the VR Complex as $\epsilon$ varies.

\section{Existing Method}

For completeness, we'll begin by proving the correctness of the existing method for updating the list of maximal cliques after an edge insertion.

\begin{lemma}
\label{Symmetry}
For a graph $G(V,E)$, if $C$ is a clique in $G(V,E \cup \{uv\})$, then $C \setminus u$ (and, by symmetry, $C \setminus v$) is a clique in $G(V,E)$.
\end{lemma}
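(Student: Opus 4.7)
The plan is to argue directly from the definition of a clique: a set of vertices is a clique precisely when every pair of distinct vertices in it is joined by an edge. So I would fix an arbitrary pair of distinct vertices $x, y \in C \setminus u$ and show that $xy \in E$.

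First, since $x, y \in C$ and $C$ is a clique in $G(V, E \cup \{uv\})$, we know $xy \in E \cup \{uv\}$. So either $xy \in E$, in which case we are done, or $xy = uv$. The key observation is that the second case is impossible: by the very definition of $C \setminus u$, neither $x$ nor $y$ equals $u$, so the unordered pair $\{x,y\}$ cannot equal $\{u,v\}$. Hence $xy \in E$, and since this holds for every pair in $C \setminus u$, this set is a clique in $G(V,E)$.

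The symmetric statement for $C \setminus v$ follows by interchanging the roles of $u$ and $v$ throughout, since the edge $uv$ is unordered and the argument only used that the removed vertex was one endpoint of the new edge.

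I do not expect any real obstacle here: the lemma is essentially unpacking the definitions, and the only subtlety is to notice that the hypothetical ``bad'' edge $uv$ cannot appear among the pairs of $C \setminus u$ precisely because we deleted one of its endpoints. The lemma could conceivably be stated more strongly (e.g., that $C \setminus u$ is in fact contained in some maximal clique of $G(V,E)$), but as written, no appeal to maximality or to any structural property of $G$ beyond the definition of a clique is needed.
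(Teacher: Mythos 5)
Your proposal is correct and follows essentially the same route as the paper: both arguments take an arbitrary pair in $C \setminus u$, note that the pair lies in $E \cup \{uv\}$ because it lies in the larger clique, and conclude that the pair must lie in $E$ since $u$ has been removed. Your version merely spells out more explicitly why the pair cannot equal $\{u,v\}$, which the paper leaves implicit.
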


\begin{proof}
$C \setminus u \subset C$, so $C \setminus u$ is a clique in $G(V, E \cup \{uv\})$, i.e., $s,t \in C \setminus u \implies st \in E \cup \{uv\}$. Since $u \not \in C \setminus u$, we must have $st \in E$, i.e., $C$ is a clique in $G(V,E)$.
\end {proof}

The following lemma says that if $C$ is a maximal clique in $G(V,E)$ but not in $G(V, E \cup \{uv\})$, then $C \subset N(u) \cup N(v)$.

\begin{lemma}
\label{Disappear}
For a graph $G(V,E)$ and $C \subset V$, if $u,v \not \in C$, then $C$ is a maximal clique in $G(V,E)$ $\implies$ $C$ is a maximal clique in $G(V,E\cup \{uv\})$.
\end{lemma}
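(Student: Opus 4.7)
The plan is to prove the contrapositive-style statement directly: assuming $C$ is maximal in $G(V,E)$ and $u,v \notin C$, I will verify the two conditions needed for $C$ to be maximal in $G(V, E \cup \{uv\})$, namely that $C$ is still a clique and that it still cannot be extended.

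The clique condition is immediate since $E \subseteq E \cup \{uv\}$ implies every edge inside $C$ is preserved. The work is in maximality. I would argue by contradiction: suppose there exists $w \notin C$ such that $C \cup \{w\}$ is a clique in $G(V, E \cup \{uv\})$. The goal is to produce a strict superset of $C$ that is a clique in $G(V,E)$, contradicting the hypothesis.

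Here is where Lemma \ref{Symmetry} does all the heavy lifting. Applied to the clique $C \cup \{w\}$ in $G(V, E \cup \{uv\})$, it tells me that both $(C \cup \{w\}) \setminus \{u\}$ and $(C \cup \{w\}) \setminus \{v\}$ are cliques in $G(V,E)$. Since $u \neq v$ (they are endpoints of an edge), at least one of $u, v$ differs from $w$; without loss of generality say $w \neq u$. Then $(C \cup \{w\}) \setminus \{u\} = C \cup \{w\}$ because $u \notin C$ by hypothesis and $u \neq w$. So $C \cup \{w\}$ is a clique in $G(V,E)$ strictly containing $C$, contradicting the maximality of $C$ in $G(V,E)$.

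The main subtlety, and essentially the only obstacle, is the bookkeeping in choosing which of $u, v$ to strip off via Lemma \ref{Symmetry}: we must avoid stripping off $w$ itself (which would not yield a superset of $C$), and the fact that $u \neq v$ is exactly what lets us always find a safe choice. Everything else is a direct substitution.
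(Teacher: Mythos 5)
Your proof is correct and rests on the same essential mechanism as the paper's: apply Lemma~\ref{Symmetry} to a hypothetical clique strictly containing $C$ in $G(V,E\cup\{uv\})$ and, using $u,v\notin C$, strip off an endpoint to obtain a clique in $G(V,E)$ that still strictly contains $C$, contradicting maximality. The only organizational difference is that you reduce to single-vertex extensions $C\cup\{w\}$ and always invoke the lemma (choosing which endpoint to strip so as not to delete $w$), whereas the paper works with an arbitrary superset $B$ and splits into cases according to whether $B$ contains both endpoints; both are sound.
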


\begin{proof}
Since $C$ is a clique in $G(V,E)$, $s,t \in C \implies st \in E \implies st \in E \cup \{uv\}$, so $C$ is a clique in $G(V,E \cup \{uv\})$. Let $B \subset V$ be a clique in $G(V, E \cup \{uv\})$ such that $C \subset B$. Since $B$ is a clique in $G(V, E\cup \{uv\})$, $s,t \in B \implies st \in E \cup \{uv\}$. However, $u,v \not \in C \implies u \not \in B$ or $v \not \in B$, since $u,v \in B \implies B \setminus u \supset C$ strictly and $B \setminus u$ is a clique in $G(V,E)$, which contradicts the maximality of $C$ in $G(V,E)$. So $s,t \in B \implies st \in E$, i.e., $B$ is a clique in $G(V,E)$. Since $C$ is maximal in $G(V,E)$, we must have $B = C$. Since $B$ was arbitrary, $C$ is maximal in $G(V, E \cup \{uv\})$.
\end{proof}

The following lemma says that if $C$ is a maximal clique in $G(V, E \cup \{uv\})$ but not in $G(V, E)$, then $C \subset N(u) \cap N(v)$.

\begin{lemma}
\label{Emerge}
For a graph $G(V,E)$ and $C \subset V$, if $u \not \in C$ or $v \not \in C$, then $C$ is a maximal clique in $G(V,E \cup \{uv\})$ $\implies$ $C$ is a maximal clique in $G(V,E)$.
\end{lemma}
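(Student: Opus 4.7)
The plan is to exploit the near-symmetry with Lemma \ref{Disappear} and the fact that adding an edge only increases the set of available cliques, so maximality in the larger edge set automatically descends. Because the hypothesis is ``$u \notin C$ or $v \notin C$,'' I will first use the symmetry between $u$ and $v$ to reduce to a single case, say $u \notin C$.

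Under that assumption, I would proceed in two steps. First, I would argue that $C$ itself is a clique in $G(V,E)$: since $C$ is a clique in $G(V, E \cup \{uv\})$ and $u \notin C$, we have $C \setminus u = C$, and Lemma \ref{Symmetry} immediately yields that $C$ is a clique in $G(V,E)$. Second, I would establish maximality by taking an arbitrary clique $B$ in $G(V,E)$ with $C \subseteq B$. Since $E \subseteq E \cup \{uv\}$, every edge of $B$ lies in $E \cup \{uv\}$, so $B$ is a clique in $G(V, E \cup \{uv\})$ as well. The assumed maximality of $C$ in $G(V, E \cup \{uv\})$ then forces $B = C$, which is exactly the maximality of $C$ in $G(V,E)$.

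There is essentially no obstacle here: the direction is the ``easy'' one, because passing from $G(V, E \cup \{uv\})$ to $G(V,E)$ only removes a single edge, and that edge cannot be used by any clique avoiding $u$ (or $v$). The only point that needs care is the explicit invocation of symmetry at the start, so that I do not have to repeat the argument for the $v \notin C$ case. The rest is a clean application of Lemma \ref{Symmetry} followed by a one-line monotonicity argument.
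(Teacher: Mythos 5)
Your proposal is correct and follows essentially the same route as the paper: reduce to the case $u \notin C$ by symmetry, observe that $C$ remains a clique after removing the edge $uv$ (the paper inlines the argument that you obtain by citing Lemma~\ref{Symmetry} with $C \setminus u = C$), and then conclude maximality from the monotonicity $E \subseteq E \cup \{uv\}$. No gaps.
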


\begin{proof}
Without loss of generality, $u \not \in C$. Since $C$ is a clique in $G(V, E \cup \{uv\})$, $s,t \in C \implies st \in E \cup \{uv\}$. Since $u \not \in C$, we must have $st \in E$, i.e., $C$ is a clique in $G(V,E)$. Let $B \subset V$ be a clique in $G(V,E)$ such that $C \subset B$. Then $s,t \in B \implies st \in E \implies st \in E \cup \{uv\}$, so B is a clique in $G(V, E \cup \{uv\})$. Since $C$ is maximal in $G(V, E \cup \{uv\})$, we must have $B = C$. Since $B$ was arbitrary, $C$ is maximal in $G(V,E)$.
\end{proof}

The following is the main theorem that justifies the existing method. It's also the source of the bottleneck in the existing method and the target of our improvement.

\begin{theorem}
\label{ExistingMethod}
For a graph $G(V,E)$ and $C \subset V$, if $u,v \in C$ and $C$ is a maximal clique in $G(V, E \cup \{uv\})$, then there exist maximal cliques $C_u, C_v \subset V$ in $G(V,E)$ such that $u \in C_u$, $v \in C_v$, and $C$ = $(C_u \cap C_v) \cup \{u,v\}$.
\end{theorem}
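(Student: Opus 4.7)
The plan is to build $C_u$ and $C_v$ directly from $C$ and then verify the equality $C = (C_u \cap C_v) \cup \{u,v\}$ by mutual inclusion. By Lemma \ref{Symmetry}, $C \setminus \{v\}$ is a clique in $G(V,E)$ that contains $u$, and $C \setminus \{u\}$ is a clique in $G(V,E)$ that contains $v$. Since $V$ is finite, any clique extends to some maximal clique, so I would let $C_u$ be a maximal clique in $G(V,E)$ with $C \setminus \{v\} \subseteq C_u$, and $C_v$ a maximal clique in $G(V,E)$ with $C \setminus \{u\} \subseteq C_v$. Because $u \in C \setminus \{v\}$ and $v \in C \setminus \{u\}$, this automatically gives $u \in C_u$ and $v \in C_v$.

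The forward inclusion $C \subseteq (C_u \cap C_v) \cup \{u,v\}$ is immediate: any $w \in C$ with $w \notin \{u,v\}$ lies in both $C \setminus \{v\} \subseteq C_u$ and $C \setminus \{u\} \subseteq C_v$, hence in $C_u \cap C_v$.

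The reverse inclusion $(C_u \cap C_v) \cup \{u,v\} \subseteq C$ is the main obstacle, and I would attack it by invoking the maximality of $C$ in $G(V, E \cup \{uv\})$. Since $u,v \in C$ by hypothesis, the only nontrivial case is $w \in C_u \cap C_v$ with $w \notin \{u,v\}$. Because $w \in C_u$ is adjacent in $E$ to every vertex of $C \setminus \{v\}$, and $w \in C_v$ is adjacent in $E$ to every vertex of $C \setminus \{u\}$, $w$ is adjacent in $E$ (and therefore in $E \cup \{uv\}$) to every vertex of $C$. Combined with the fact that $C$ itself is a clique in $G(V, E \cup \{uv\})$, this shows $C \cup \{w\}$ is a clique in $G(V, E \cup \{uv\})$, and maximality of $C$ forces $w \in C$, completing the proof.
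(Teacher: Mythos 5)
Your proposal is correct and follows essentially the same route as the paper: extend $C \setminus \{v\}$ and $C \setminus \{u\}$ to maximal cliques $C_u$ and $C_v$ of $G(V,E)$, observe the forward inclusion, and invoke maximality of $C$ in $G(V, E \cup \{uv\})$ for the reverse. The only cosmetic difference is that you add one vertex $w \in C_u \cap C_v$ at a time and show $C \cup \{w\}$ is a clique, whereas the paper notes all at once that $(C_u \cap C_v) \cup \{u,v\}$ is a clique containing $C$ so the inclusion cannot be strict; these are the same argument.
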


\begin{proof}
$C \setminus v$ and $C \setminus u$ are cliques in $G(V,E)$, and thus each lie within some maximal clique in $G(V,E)$, say $C_u$ and $C_v$, respectively. $u \in C \setminus v \implies u \in C_u$, and $v \in C \setminus u \implies v \in C_v$. Since $C \setminus v \subset C_u$ and $C \setminus u \subset C_v$, we have $(C \setminus v) \cap (C \setminus u) \subset C_u \cap C_v$. So $C = ((C \setminus v) \cap (C \setminus u)) \cup \{u, v\} \subset (C_u \cap C_v) \cup \{u,v\}$. However, this inclusion cannot be strict, since $(C_u \cap C_v) \cup \{u,v\}$ is a clique in $G(V, E \cup \{u,v\})$ and $C$ is maximal. Thus $C = (C_u \cap C_v) \cup \{u,v\}$.
\end{proof}

The previous theorems are enough to justify the existing method. Lemma ~\ref{Emerge} tells us that new maximal cliques emerge only in $N(u) \cap N(v)$, and Lemma ~\ref{Disappear} says that cliques lose their maximality only if they lie in $N(u) \cup N(v)$. Together, they say that everything outside $N(u) \cup N(v)$ remains the same. Theorem ~\ref{ExistingMethod} gives us a method for finding the new maximal cliques: for each maximal clique $C_u$ in $G$ containing $u$ and each maximal clique $C_v$ in G containing $v$, generate $(C_u \cap C_v) \cup \{u,v\}$ and mark it as a candidate for maximality. Then we test each candidate for maximality and add the candidate to the enumeration only if it passes the test. Theorem 5.3 in \cite{hendrix} gives a method for testing a candidate's maximality in $O(|N(u) \cap N(v)|^2)$ time. If there's $|\{C_u\}|$ maximal cliques containing $u$ and $|\{C_v\}|$ maximal cliques containing $v$, then checking each candidate against the others takes $O(|\{C_u\}| |\{C_v\}| |N(u) \cap N(v)|^2)$ time.

\section{Main Result}

The following Corollary justifies our improvement to the existing method.

\begin{corollary}
\label{ProposedMethod}
For a graph $G(V,E)$ and $C \subset V$, if $u,v \in C$ and $C$ is a maximal clique in $G(V,E \cup \{u,v\})$, then there exist maximal cliques $C_u, C_v \subset V$ in $G(V,E)$ such that $u \in C_u$, $v \in C_v$, and $C = (C_u \cap N(v)) \cup \{u,v\} = (N(u) \cap C_v) \cup \{u,v\}$.
\end{corollary}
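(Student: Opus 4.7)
The plan is to reduce the corollary to Theorem \ref{ExistingMethod}. That theorem yields maximal cliques $C_u, C_v \subset V$ in $G(V, E)$ with $u \in C_u$, $v \in C_v$, and $C = (C_u \cap C_v) \cup \{u, v\}$; moreover, its proof actually produces $C_u$ and $C_v$ as maximal cliques containing $C \setminus v$ and $C \setminus u$ respectively, so I may assume $C \setminus v \subset C_u$ and $C \setminus u \subset C_v$. It then suffices to verify $C = (C_u \cap N(v)) \cup \{u, v\}$, since the identity $C = (N(u) \cap C_v) \cup \{u, v\}$ follows by the symmetry $u \leftrightarrow v$.

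First I would show $C \subset (C_u \cap N(v)) \cup \{u, v\}$. I pick $w \in C$ and split on whether $w \in \{u, v\}$. In the nontrivial case, $w \in C \setminus v \subset C_u$; moreover, since $w, v \in C$ with $w \ne v$, the edge $wv$ lies in $E \cup \{uv\}$, and because $w \ne u$ it cannot equal $uv$, so $wv \in E$ and $w \in N(v)$.

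Next I would show that $(C_u \cap N(v)) \cup \{u, v\}$ is itself a clique in $G(V, E \cup \{uv\})$. Combined with the inclusion from the first step and the maximality of $C$ in $G(V, E \cup \{uv\})$, this forces equality. For the clique check, I take distinct $s, t$ in the set: since $u \in C_u$, every case in which neither is $v$ reduces to both $s, t \in C_u$, giving $st \in E$ because $C_u$ is a clique in $G(V, E)$; otherwise one of them is $v$ and the other lies in $N(v) \cup \{u\}$, so $st$ either belongs to $E$ (when the other vertex is in $N(v) \setminus \{v\}$) or equals $uv$.

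The only subtle point, and the place where I expect the main obstacle, is recognizing that the $C_u$ guaranteed by Theorem \ref{ExistingMethod} can be taken to contain all of $C \setminus v$, not merely $u$; this is what makes the first containment go through, since without it there is no reason that a generic maximal clique containing $u$ should witness every neighbor of $v$ that lies in $C$. Once that observation is in hand, the rest is a mechanical case analysis.
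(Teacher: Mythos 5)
Your proof is correct and takes essentially the same route as the paper: establish the inclusion $C \subset (C_u \cap N(v)) \cup \{u,v\}$, verify that the right-hand side is a clique in $G(V, E \cup \{uv\})$, and invoke the maximality of $C$ to force equality. One remark on the point you flag as the main obstacle: the containment $C \setminus v \subset C_u$ does not require inspecting the proof of Theorem~\ref{ExistingMethod}, since it already follows from the statement ($C = (C_u \cap C_v) \cup \{u,v\}$ together with $u \in C_u$), and the paper obtains the inclusion even more directly by observing that $C_v \subset N(v)$, whence $C_u \cap C_v \subset C_u \cap N(v)$.
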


\begin{proof}
Let $C_u, C_v$ be the maximal cliques in $G(V,E)$ guaranteed by Theorem ~\ref{ExistingMethod}. $C = (C_u \cap C_v) \cup \{u,v\} \subset (C_u \cap N(v)) \cup \{u,v\}, (N(u) \cap C_v) \cup \{u,v\}$. Since every subset of a clique is a clique, $(C_u \cap N(v)) \cup \{u,v\}$ and $(N(u) \cap C_v) \cup \{u,v\}$ are cliques in $G(V, E \cup \{u,v\})$. Since $C$ is a maximal clique in $G(V, E \cup \{u,v\})$, the inclusion must actually be equality.
\end{proof}

Now that we no longer need to generate pairwise intersections, our list of candidate cliques is $|\{C_u\}|$ (or $|\{C_v\}|$ if we generate candidates from $\{C_v\}$) and checking the candidates for maximality takes $O(|\{C_u\}| |N(u) \cap N(v)|^2)$ time. This is significant since $|\{C_v\}|$ can grow exponentially with the number of vertices in the graph $|V|$. Note that generating candidates from $\{C_u\}$ and $\{C_v\}$ is redundant, since in the proof $(C_u \cap N(v)) \cup \{u,v\} = (N(u) \cap C_v) \cup \{u,v\}$. This means we now only require either $\{C_u\}$ or $\{C_v\}$. If we have both lists, we could save time by choosing to generate candidates using the shorter list only.

\section{Maximal k-Cliques}

A \textit{maximal $k$-clique} $C \subset V$ is a subset of vertices that is either a clique in $G$ of size $k$ or a maximal clique in $G$ of size $< k$. The number of $k$-cliques in a graph grows in polynomial time: there are at most $\sum_{i=1}^k \binom{|V|}{i} \in O(|V|^k)$ maximal $k$-cliques in a graph. The task of the maximal $k$-clique enumeration problem is to enumerate all maximal $k$-cliques in a graph, and is much easier than the classic MCE problem. Similarly, the task of the dynamic maximal $k$-clique enumeration problem is to update the maximal $k$-clique enumeration after inserting or deleting an edge. Researchers are sometimes willing to sacrifice information about larger cliques in exchange for better runtimes. Returning to computational topology, surface reconstruction from a point cloud only requires knowledge about 3-cliques. Fortunately, Corollary ~\ref{ProposedMethod} can also be used to accelerate the dynamic maximal $k$-clique enumeration problem in the case of inserting an edge.

The proofs of Lemma ~\ref{Symmetry}, Lemma ~\ref{Disappear}, and Lemma ~\ref{Emerge} are identical to those used to prove equivalent statements for maximal $k$-cliques. However, adjusting Theorem ~\ref{ExistingMethod} for $k$-cliques requires some attention to detail:

\begin{theorem}
\label{ExistingKClique}
For a graph $G(V,E)$ and $C \subset V$, if $u,v \in C$ and $C$ is a maximal k-clique in $G(V, E \cup \{uv\})$, then either $|C| < k$ and there exist maximal k-cliques $C_u, C_v \subset V$ in $G(V,E)$ such that $u \in C_u$, $v \in C_v$, and $C$ = $(C_u \cap C_v) \cup \{u,v\}$, or $|C| = k$ and $C \subset (C_u \cap C_v) \cup \{u,v\}$ and $|(C_u \cap C_v) \cup \{u,v\}| \leq k+1$.
\end{theorem}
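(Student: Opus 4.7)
The plan is to mimic the proof of Theorem \ref{ExistingMethod}, splitting into two cases according to whether $|C| < k$ or $|C| = k$. The first case will follow essentially verbatim from the earlier proof, while the second requires a new size bound that exploits the hypothesis that $uv \notin E$ (otherwise we would not be inserting it).

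First, I would apply Lemma \ref{Symmetry} (whose proof carries over to $k$-cliques, as noted in the text) to conclude that $C \setminus v$ and $C \setminus u$ are cliques in $G(V,E)$, each of size at most $k - 1$. Each is contained in some maximal clique of $G(V,E)$; call these $M_u$ and $M_v$. From these I would extract maximal $k$-cliques: if $|M_u| < k$, set $C_u := M_u$, which is a maximal clique of size $< k$ and hence a maximal $k$-clique by definition; otherwise $|M_u| \geq k$, so pick any $k$-vertex subset of $M_u$ containing $C \setminus v$, which is a $k$-clique and therefore also a maximal $k$-clique. Construct $C_v$ analogously. By construction $u \in C_u$, $v \in C_v$, $C \setminus v \subset C_u$, and $C \setminus u \subset C_v$, so exactly as in Theorem \ref{ExistingMethod} we get $C = ((C \setminus v) \cap (C \setminus u)) \cup \{u,v\} \subset (C_u \cap C_v) \cup \{u,v\}$.

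For the case $|C| < k$, a maximal $k$-clique of size less than $k$ is by definition a maximal clique, so $C$ is maximal in $G(V, E \cup \{uv\})$; since $(C_u \cap C_v) \cup \{u,v\}$ is a clique in $G(V, E \cup \{uv\})$ containing $C$, maximality of $C$ forces equality, yielding the first branch of the conclusion. For the case $|C| = k$, the inclusion above already gives $C \subset (C_u \cap C_v) \cup \{u,v\}$, and it remains to bound the right-hand side by $k+1$. Here I would use that $uv \notin E$, so no clique in $G(V,E)$ can contain both $u$ and $v$; in particular $v \notin C_u$ and $u \notin C_v$. Therefore $u \in C_u \setminus C_v$, so $C_u \cap C_v \subset C_u \setminus \{u\}$ has size at most $|C_u| - 1 \leq k - 1$, and since neither $u$ nor $v$ lies in $C_u \cap C_v$ we obtain $|(C_u \cap C_v) \cup \{u,v\}| \leq (k-1) + 2 = k + 1$.

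The main obstacle I expect is bookkeeping rather than insight: one must carefully check that the extracted $C_u$ and $C_v$ really are maximal $k$-cliques in both sub-cases (rather than merely cliques of size at most $k$), and the sharp bound $k+1$ hinges on the easily-overlooked observation that $u$ and $v$ are non-adjacent in $G(V,E)$, which prevents each of $C_u$ and $C_v$ from contributing both endpoints to their intersection. Nothing else goes beyond the techniques already established in Lemma \ref{Symmetry} and Theorem \ref{ExistingMethod}.
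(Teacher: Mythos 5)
Your proposal follows the same skeleton as the paper's proof --- extend $C \setminus v$ and $C \setminus u$ to maximal $k$-cliques $C_u$, $C_v$, deduce $C \subset (C_u \cap C_v) \cup \{u,v\}$, and split on $|C| < k$ versus $|C| = k$ --- and the first case is handled identically. Two points differ. First, you spell out \emph{why} a clique of size $\leq k$ extends to a maximal $k$-clique (take a maximal clique $M_u$ and either use it directly or trim it to a $k$-subset containing $C \setminus v$); the paper simply asserts this, so your version is a welcome elaboration rather than a deviation. Second, and more substantively, your derivation of the bound $|(C_u \cap C_v) \cup \{u,v\}| \leq k+1$ rests on the extra hypothesis $uv \notin E$, from which you get $v \notin C_u$, $u \notin C_v$, and hence $|C_u \cap C_v| \leq k-1$ with $u,v$ disjoint from the intersection. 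The paper instead argues by cases on whether $C_u = C_v$: if they are equal the candidate set is just $C_u$ (size $\leq k$), and if they are distinct then $|C_u \cap C_v| \leq k-1$ because distinct maximal $k$-cliques cannot be nested. Your route is cleaner where it applies and avoids that implicit non-nesting fact, but the theorem as stated does not assume $uv \notin E$, and the paper's proof explicitly accommodates $uv \in E$ (``as may be the case when $uv \in E$''). So as written your argument leaves that degenerate case uncovered; it is easily patched (if $uv \in E$ one may take $C_u = C_v$ to be a single maximal $k$-clique containing $C$, giving size $\leq k$), but the patch should be stated, or else the hypothesis $uv \notin E$ should be added to the theorem.
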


\begin{proof}
$C \setminus v$ and $C \setminus u$ are cliques in $G(V,E)$ of size $\leq k$, and thus each lie within some maximal $k$-clique in $G(V,E)$, say $C_u$ and $C_v$, respectively. $u \in C \setminus v \implies u \in C_u$, and $v \in C \setminus u \implies v \in C_v$. Since $C \setminus v \subset C_u$ and $C \setminus u \subset C_v$, we have $(C \setminus v) \cap (C \setminus u) \subset C_u \cap C_v$. So $C = ((C \setminus v) \cap (C \setminus u)) \cup \{u, v\} \subset (C_u \cap C_v) \cup \{u,v\}$. If $|C| < k$, then $C$ is maximal in $G$ and this inclusion cannot be strict, since $(C_u \cap C_v) \cup \{u,v\}$ is a clique in $G(V, E \cup \{u,v\})$ and $C$ is maximal; in this case, $C = (C_u \cap C_v) \cup \{u,v\}$. Else, $|C| = k$. $|C_u|, |C_v| \leq k$, so $|C_u \cap C_v| \leq k$. If  $C_u = C_v$, as may be the case when $uv \in E$, then $(C_u \cap C_v) \cup \{u,v\} = C_u \cap C_v$ so $|(C_u \cap C_v) \cup \{u,v\}| \leq k$. If $C_u \neq C_v$, then $|C_u \cap C_v| \leq k-1$ so $|(C_u \cap C_v) \cup \{u,v\}| \leq |(C_u \cap C_v)| + |\{u,v\}| = k+1$.
\end{proof}

So updating the list of maximal $k$-cliques is similar to updating the list of maximal cliques, except that now in the case where $k \leq |(C_u \cap C_v) \cup \{u,v\}|$, we don't have to run any maximality test; every size-$k$ subset of the candidate becomes a maximal $k$-clique.

Corollary ~\ref{ProposedMethod} lets us generate a linear number of candidates for $k$-cliques, too.

\begin{corollary}
\label{ProposedKClique}
For a graph $G(V,E)$ and $C \subset V$, if $u,v \in C$ and $C$ is a maximal $k$-clique in $G(V,E \cup \{u,v\})$, then there exist maximal $k$-cliques $C_u, C_v \subset V$ in $G(V,E)$ such that $u \in C_u$, $v \in C_v$, and $C = (C_u \cap N(v)) \cup \{u,v\} = (N(u) \cap C_v) \cup \{u,v\}$ or $|C| = k$ and $C \subset (C_u \cap N(v)) \cup \{u,v\}, (N(u) \cap C_v) \cup \{u,v\}$.
\end{corollary}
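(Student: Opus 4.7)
The plan is to imitate the proof of Corollary \ref{ProposedMethod}, but using Theorem \ref{ExistingKClique} in place of Theorem \ref{ExistingMethod} and inserting a case analysis on $|C|$. First I apply Theorem \ref{ExistingKClique} to obtain maximal $k$-cliques $C_u, C_v$ in $G(V,E)$ with $u \in C_u$ and $v \in C_v$; the theorem guarantees $C \subset (C_u \cap C_v) \cup \{u,v\}$, with equality whenever $|C| < k$.

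The key observation, already implicit in the original corollary, is that a clique containing a vertex is automatically a subset of that vertex's closed neighborhood: since $u \in C_u$ and $C_u$ is a clique in $G(V,E)$, we have $C_u \subset N(u)$, and likewise $C_v \subset N(v)$. Intersecting, $C_u \cap C_v \subset C_u \cap N(v)$ and $C_u \cap C_v \subset N(u) \cap C_v$, so
\[
C \subset (C_u \cap C_v) \cup \{u,v\} \subset (C_u \cap N(v)) \cup \{u,v\},
\]
and by the symmetric argument $C \subset (N(u) \cap C_v) \cup \{u,v\}$. Both enlarged sets remain cliques in $G(V, E \cup \{uv\})$, being subsets of $C_u$ or $C_v$ together with the new edge.

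It remains to split on $|C|$. If $|C| = k$, the chain of inclusions just established is exactly the second alternative of the corollary, and there is nothing further to prove. If $|C| < k$, then by the definition of maximal $k$-clique $C$ is a maximal clique in $G(V, E \cup \{uv\})$ in the usual (unrestricted) sense, so any clique containing $C$ must equal $C$; applied to the two enlarged cliques above, this upgrades both inclusions to equalities and yields the first alternative. The single subtle point is this last step: one must notice that the hypothesis $|C| < k$ promotes $C$ from a mere maximal $k$-clique to an honest maximal clique, which is what licenses replacing $\subset$ by $=$. Everything else is bookkeeping inherited from the proof of Corollary \ref{ProposedMethod}.
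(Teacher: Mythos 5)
Your proof is correct and follows essentially the same route as the paper's: both invoke Theorem~\ref{ExistingKClique} to get $C \subset (C_u \cap C_v) \cup \{u,v\} \subset (C_u \cap N(v)) \cup \{u,v\}$, observe that the enlarged sets are cliques in $G(V, E \cup \{uv\})$, and use maximality to upgrade the inclusion to equality only when $|C| < k$. Your version merely spells out the steps the paper leaves implicit (that $C_v \subset N(v)$ drives the second inclusion, and that $|C| < k$ is what promotes $C$ to an honest maximal clique).
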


\begin{proof}
Let $C_u, C_v$ be the maximal $k$-cliques in $G(V,E)$ guaranteed by Theorem ~\ref{ExistingKClique}. $C \subset (C_u \cap C_v) \cup \{u,v\} \subset (C_u \cap N(v)) \cup \{u,v\}, (N(u) \cap C_v) \cup \{u,v\}$. Since every subset of a clique is a clique, $(C_u \cap N(v)) \cup \{u,v\}$ and $(N(u) \cap C_v) \cup \{u,v\}$ are cliques in $G(V, E \cup \{u,v\})$. If $|C| < k$, then $C$ is a maximal clique in $G(V, E \cup \{u,v\})$, so the inclusion must actually be equality.
\end{proof}

We follow the same procedure for testing candidates as before: if $|(C_u \cap N(v)) \cup \{u,v\}| < k$, use the maximality test from \cite{hendrix}; else, every size-$k$ subset becomes a maximal $k$-clique. As in the application of Corollary ~\ref{ProposedMethod} to the traditional MCE setting, we only need either $\{C_u\}$ or $\{C_v\}$ to correctly generate the list of $k$-cliques for the updated graph since $C \subset (C_u \cap N(v)) \cup \{u,v\}$ and $C \subset (N(u) \cap C_v) \cup \{u,v\}$.

\section{Parallelism}

Lemmas ~\ref{Disappear} and ~\ref{Emerge} tell us that the list of maximal cliques doesn't change outside of $N(u) \cup N(v)$. If $E_n \subset E_{n+1} = E_n \cup \{u_1 v_1\} \subset E_{n+2} = E_{n+1} \cup \{u_2 v_2\}$, and $(N(u_2) \cup N(v_2)) \cap \{u_1 v_1 \} = \emptyset$, then inserting $u_1 v_1$ affects neither $\{C_{u_2}\}$ nor $\{C_{v_2}\}$, so the updates from $G(V,E_n)$ to $G(V,E_{n+1})$ and from $G(V,E_{n+1})$ to $G(V,E_{n+2})$  are independent and can run in parallel, even with the existing method. Since we only need either $\{C_u\}$ or $\{C_v\}$ to use the proposed method, we have more opportunities for parallelism: if only $N(u_2) \cap \{u_1 v_1\} = \emptyset$, then inserting $u_1 v_1$ doesn't affect $\{C_{u_2}\}$, so the updates can still run in parallel if we use the proposed method.

\end{document}